\documentclass[sn-mathphys,Numbered]{sn-jnl}


\usepackage{graphicx}%
\usepackage{multirow}%
\usepackage{amsmath,amssymb,amsfonts}%
\usepackage{amsthm}%
\usepackage{mathrsfs}%
\usepackage[title]{appendix}%
\usepackage{xcolor}%
\usepackage{textcomp}%
\usepackage{manyfoot}%
\usepackage{booktabs}%
\usepackage{algpseudocode}%
\usepackage{listings}%
\usepackage{amsmath}
\usepackage{amsthm}
\usepackage[linesnumbered,ruled,vlined]{algorithm2e}
\usepackage[colorinlistoftodos,prependcaption,textsize=tiny]{todonotes}
\usepackage{tikz}
\usetikzlibrary{shapes.geometric, arrows}
\tikzstyle{startstop} = [rectangle, rounded corners, minimum width=3cm, minimum height=0.6 cm,text centered, draw=black, fill=red!30]
\tikzstyle{io} = [trapezium, trapezium left angle=70, trapezium right angle=110, minimum width=3cm, minimum height=0.6 cm, text centered, draw=black, fill=blue!30]
\tikzstyle{io2} = [trapezium, trapezium left angle=70, trapezium right angle=110, minimum width=3cm, minimum height=0.6 cm, text centered, draw=black, fill=yellow!30]
\tikzstyle{io3} = [trapezium, trapezium left angle=70, trapezium right angle=110, minimum width=3cm, minimum height=0.6 cm, text centered, draw=black, fill=blue!30]
\tikzstyle{process} = [rectangle, minimum width=3.6cm, minimum height=0.6 cm, text centered, draw=black, fill=orange!30]
\tikzstyle{process2} = [rectangle, minimum width=3.6cm, minimum height=0.6 cm, text centered, draw=black, fill=red!30]
\tikzstyle{process3} = [rectangle, minimum width=3cm, minimum height=0.6 cm, text centered, draw=black, fill=green!30]
\tikzstyle{process4} = [rectangle, minimum width=3.6cm, minimum height=0.6 cm, text centered, draw=black, fill=blue!30]
\tikzstyle{process5} = [rectangle, minimum width=3.6cm, minimum height=0.6 cm, text centered, draw=black, fill=yellow!30]
\tikzstyle{process6} = [rectangle, minimum width=3.6cm, minimum height=0.6 cm, text centered, draw=black, fill=white!30]
\tikzstyle{decision} = [diamond, minimum width=2cm, minimum height=0.6 cm, text width=3cm, text centered, draw=black, fill=green!30]
\tikzstyle{block} = [rectangle, minimum width=0.6cm, minimum height=0.6cm, text centered, draw=black, fill=blue!30]
\tikzstyle{block2} = [rectangle, minimum width=0.6cm, minimum height=0.6cm, text centered, draw=black, fill=red!30]
\tikzstyle{block3} = [rectangle, minimum width=0.6cm, minimum height=0.6cm, text centered, draw=black, fill=green!30]
\tikzstyle{block4} = [rectangle, minimum width=0.6cm, minimum height=0.6 cm, text centered, draw=black, fill=yellow!30]
\tikzstyle{block5} = [rectangle, minimum width=0.6 cm, minimum height=0.6 cm, text centered, draw=black, fill=orange!30]
\tikzstyle{block6} = [rectangle, minimum width=0.6 cm, minimum height=0.6 cm, text centered, draw=black, fill=pink!30]
\tikzstyle{block7} = [rectangle, minimum width=0.6 cm, minimum height=0.6 cm, text centered, draw=black, fill=white!30]
\tikzstyle{process01} = [rectangle, minimum width=1.5cm, minimum height=0.6 cm, text centered, draw=black, fill=orange!30]
\tikzstyle{process21} = [rectangle, minimum width=1.5cm, minimum height=0.6 cm, text centered, draw=black, fill=red!30]
\tikzstyle{process31} = [rectangle, minimum width=1.5cm, minimum height=0.6 cm, text centered, draw=black, fill=green!30]
\tikzstyle{process41} = [rectangle, minimum width=1.5cm, minimum height=0.6 cm, text centered, draw=black, fill=blue!30]
\tikzstyle{process51} = [rectangle, minimum width=1.5cm, minimum height=0.6 cm, text centered, draw=black, fill=yellow!30]
\tikzstyle{arrow} = [thick,->,>=stealth]
\usepackage{adjustbox}
\usepackage{listings}
\usepackage{diagbox}
\usepackage{makecell}




\theoremstyle{thmstyleone}%
\newtheorem{theorem}{Theorem}
%

\theoremstyle{thmstyletwo}%
\newtheorem{example}{Example}%

\theoremstyle{thmstylethree}%
\newtheorem{definition}{Definition}%

\raggedbottom
\lstdefinestyle{mystyle}{
  basicstyle=\ttfamily\small,,
  breakatwhitespace=false,         
  breaklines=true,                 
  captionpos=b,                    
  keepspaces=true,                 
  numbers=left,                    
  numbersep=3pt,                  
  showspaces=false,                
  showstringspaces=false,
  showtabs=false,                  
  tabsize=2
}
\lstset{style=mystyle}
\newcolumntype{?}{!{\vrule width 2pt}}
\usepackage[numbers]{natbib}
\usepackage{hyperref}
\usepackage{url}

\begin{document}

\title[Article Title]{Model Checking of \emph{vGOAL}}


\author{\fnm{Yi} \sur{Yang}}\email{yi.yang@kuleuven.be}

\author{\fnm{Tom} \sur{Holvoet}}\email{tom.holvoet@kuleuven.be}


\affil{\orgdiv{imec-DistriNet}, \orgname{KU Leuven}, \orgaddress{\street{Celestijnenlaan}, \city{Leuven}, \postcode{3001}, \country{Belgium}}}


\abstract{Developing autonomous decision-making requires safety assurance.
Agent programming languages like AgentSpeak and Gwendolen provide tools for programming autonomous decision-making. However, despite numerous efforts to apply model checking to these languages, challenges persist such as a faithful semantic mapping between agent programs and the generated models, efficient model generation, and efficient model checking.

As an extension of the agent programming language GOAL, \emph{vGOAL} has been proposed to formally specify autonomous decisions with an emphasis on safety. This paper tackles the mentioned challenges through two automated model-checking processes for \emph{vGOAL}:  one for Computation Tree Logic and another for Probabilistic Computation Tree Logic. Compared with the existing model-checking approaches of agent programming languages, it has three main advantages. First, it efficiently performs automated model-checking analysis for a given \emph{vGOAL} specification, including efficiently generating input models for NuSMV and Storm and leveraging these efficient model checkers. Second, the semantic equivalence is established for both nondeterministic models and probabilistic models of \emph{vGOAL}: from \emph{vGOAL} to transition systems or DTMCs. 
Third, an algorithm is proposed for efficiently detecting errors, which is particularly useful for \emph{vGOAL} specifications that describe complex scenarios. Validation and experiments in a real-world autonomous logistic system with three autonomous mobile robots illustrate both the efficiency and practical usability of the automated CTL and PCTL model-checking process for \emph{vGOAL}.}

\keywords{CTL Model Checking, PCTL Model Checking, \emph{vGOAL}, Autonomous Decision-Making, Agent Programming Languages}



\maketitle

\section{Introduction}
The advancement of technology is leading to an increased interest in systems capable of independently executing tasks, without the need for instructions or human intervention. At the core of these autonomous systems lies the autonomous decision-making.  The public rightfully expresses concerns about the safety of autonomous systems. Therefore, it poses a challenging yet indispensable endeavor to ensure the correctness of autonomous decision-making, particularly regarding safety.

Agent programming languages (APLs) have been extensively researched over many years as a means of programming autonomous agents within a multi-agent system \cite{shoham1993agent}. Throughout the past few decades, various APLs have emerged, including AgentSpeak \cite{bordini2005bdi}, Gwendolen \cite{dennis2008gwendolen}, and GOAL \cite{hindriks2009programming}. More recently, \emph{vGOAL} \cite{YY_EMAS23} has been proposed as a language, based on GOAL, for formally specifying autonomous decision-making. The interpreter of \emph{vGOAL} serves as an autonomous decision-maker that effectively generates safe decisions without requiring external formal verification for a sound \emph{vGOAL} specification \cite{YY_PAAMS23}. In this context, we consider a \emph{vGOAL} specification sound when it satisfies two properties: first, it can effectively achieve all goals; second, it always generates safe decisions under all realistic circumstances including encountering unexpected errors or expected errors.  

Formal verification is a convincing approach to ensure the correctness of agent programs. Model checking is the most successful and influential verification method in verifying APLs, including GOAL, Gwendolen, and AgentSpeak, owing to the automated verification process \cite{weiss_2013, bordini2003model, dennis2012model, jongmans2010model}. However, case studies for model checking of APLs remain relatively simple and conceptual due to three main challenges: a faithful semantic mapping between agent programs and generated models, efficient model generation, efficient model checkers \cite{bordini2003model, dennis2012model, jongmans2010model,dennis2018two}. Moreover,  autonomous systems often encompass multiple agents with numerous goals, the complexity can lead to extensive state spaces in their equivalent semantic models.

\begin{figure}[h]
   \centering
    \begin{tikzpicture}[node distance=2.2cm]
\node (1) [io2] {\emph{vGOAL} Specification};
\node (2) [process2, below of=1, yshift=+1 cm] {Model Generation};
\node (3) [io2, below of=2, xshift=-3.2cm,yshift=+1 cm] {Transition System};
\node (4) [io2, below of=2, xshift=+3.2cm,yshift=+1 cm] {DTMC};
\node (5) [process2, below of=3, yshift=+1 cm] {NuSMV Encoding};
\node (6) [process2, below of=4, yshift=+1 cm] {Storm Encoding};
\node (7) [io2, below of=5, yshift=+1 cm] {SMV File};
\node (8) [io2, below of=6, yshift=+1 cm] {PM File};
\node (9) [process2 , below of=7, yshift=+1 cm] {NuSMV};
\node (10) [process2 , below of=8, yshift=+1 cm] {Storm};
\node (11) [io2, below of=9, yshift=+1 cm] {CTL Model Checking Analysis};
\node (12) [io2, below of=10, yshift=+1 cm] {PCTL Model Checking Analysis};
\draw [arrow] (1) --(2);
\draw [arrow] (2) --(3);
\draw [arrow] (2) --(4);
\draw [arrow] (3) --(5);
\draw [arrow] (4) --(6);
\draw [arrow] (5) --(7);
\draw [arrow] (6) --(8);
\draw [arrow] (7) --(9);
\draw [arrow] (8) --(10);
\draw [arrow] (9) --(11);
\draw [arrow] (10) --(12);
\end{tikzpicture}
\caption{The Overview of the CTL and PCTL Model-Checking Process for \emph{vGOAL}.}
    \label{fig:AMC}
\end{figure}
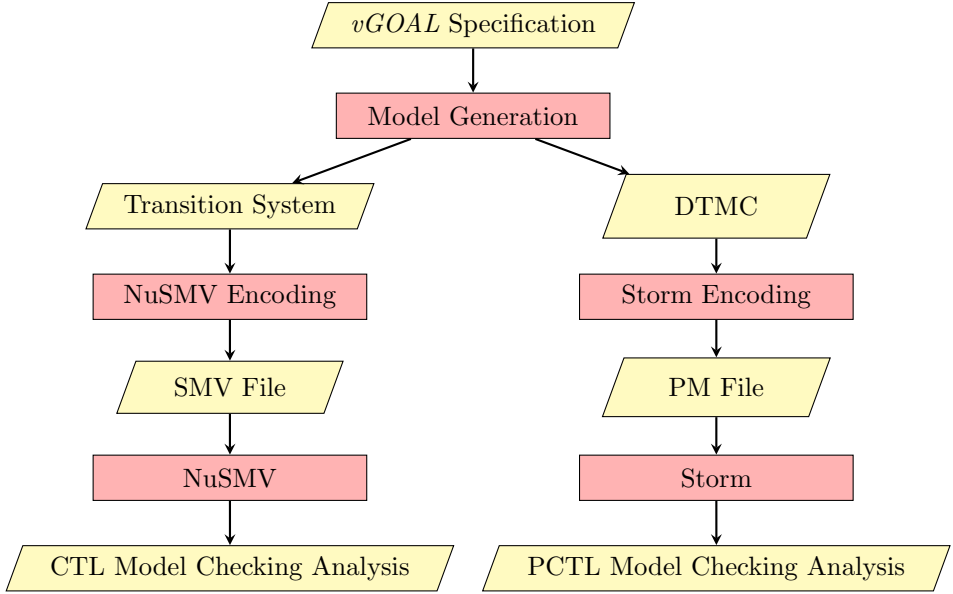

To address these challenges, we design and implement automated CTL and PCTL model-checking processes for \emph{vGOAL}, efficiently verifying autonomous decision-making. Figure \ref{fig:AMC} presents the overview of the automated CTL and PCTL model-checking processes for \emph{vGOAL}. The process takes a \emph{vGOAL} specification as input and produces both a CTL model-checking analysis and a PCTL model-checking analysis as outputs. Yellow parallelograms represent files during the model-checking processes. It can be a \emph{vGOAL} specification, a generated model: a transition system or a discrete-time Markov chain (DTMC), an input file for a model checker: an $smv$ file for NuSMV \cite{cimatti1999nusmv} or a $pm$ file for Storm \cite{dehnert2017storm}, or a CTL or PCTL mode-checking analysis. Red rectangles are steps of the model-checking process. The CTL model-checking process involves three steps: generating a transition system, encoding for NuSMV, and performing CTL model checking using NuSMV. The PCTL model-checking process involves three steps: generating a DTMC, encoding for Storm, and performing PCTL model checking using Storm.

The automated model-checking processes of \emph{vGOAL} addresses the aforementioned challenges well. First, the \emph{vGOAL} translator implements two equivalently semantic mappings from a \emph{vGOAL} specification to either a transition system or a DTMC. Second, the \emph{vGOAL} translator efficiently translates a \emph{vGOAL} specification to a transition system or DTMC. Third, the model-checking process for \emph{vGOAL} leverages the efficiency of NuSMV \cite{cimatti1999nusmv} and Storm \cite{dehnert2017storm}. Moreover, we propose an algorithm that can efficiently detect errors for a \emph{vGOAL} specification that violates safety or liveness properties if the autonomous system involves multiple agents with many goals. Using a real-world autonomous logistic system with three mobile robots, we have conducted experiments to illustrate the practical efficiency and the use of the automated model-checking processes for \emph{vGOAL}.

\section{Related Work}
The first significant work on verifying the belief–desire–intention (BDI) logic programming language through model checking of AgentSpeak(L) is detailed in \cite{bordini2003model}. To enable the model checking of AgentSpeak(L), two components are introduced: AgentSpeak(F)-a variant of AgentSpeak(L), and a translation from AgentSpeak(F) programs to Promela, the model specification language for the SPIN model checking \cite{holzmann1997model}. The model checking of AgentSpeak(F) has three main limitations. First, it verifies the properties of the abstract version of systems instead of the real implementation. Second, the translation process is difficult to understand \cite{dennis2012model}. Third, the model-checking process is not efficient. 

\cite{dennis2012model} describes the first comprehensive approach to the verification of programs developed using programming languages based on the BDI model of agency. The comprehensive approach is officially released as the MCAPL (Model-checking Agent Programming Languages) framework in \cite{dennis2018mcapl}. To verify various agent programming languages, the MCAPL consists of two layers, the Agent Infrastructure Layer (AIL) as a toolkit for interpreting BDI APLs, and Agent JavaPathFinder (AJPF) model checker as the automated verification tool. Interpreters have been implemented for Gwendolen, GOAL, SAAPL\cite{winikoff2007implementing}, and ORWELL \cite{dastani2009programming}. Compared with the model checking of  AgentSpeak(L), the MCAPL checks the executions of agent programs. However, the inefficiency of the AJPF model checker is particularly slow. It explores the whole state space for each property verification. For the same property verification, it took 65 seconds to verify in SPIN, while it took 9 hours in AJPF \cite{dennis2012model}. Moreover, the faithfulness of the interpretation of BDI language in AIL. To address the inefficiency issue, \cite{dennis2018two} presents a translation from AJPF to Promela, enabling the use of the SPIN model checker, and a translation from AJPF to PRISM, enabling the use of the PRISM model checker \cite{kwiatkowska2011prism}. The empirical comparison among AJPF, SPIN, and PRISM is conducted with rather simple case studies, the most complicated case study only has 408 states. The experiments show PRISM has significantly better efficiency and supports more expressive property specifications than AJPF and SPIN. However, the verification performance of Storm is still much better than the verification performance described in \cite{dennis2018two}. For a DTMC with 98 states, the verification time of PRISM took 1.3 seconds, while the verification time for Storm took only around 0.14 seconds for a DTMC with 138 states. 

\cite{jongmans2010model} presents an interpreter-based model checker (IMC) to model check agent programs. IMC includes a program interpreter to generate state space, and a model checker specifically built on this interpreter. This approach has only been validated using single-agent systems. Using a subset of GOAL as the target language, an empirical comparison among IMC, Maude model checker (MMC) \cite{eker2004maude}, and AJPF is conducted for a single deterministic agent system, specifically, blocks worlds. IMC has the best efficiency based on the conducted experiments. Compared with MMC and AJPF, IMC can verify single-agent systems with non-deterministic actions. Moreover, IMC uses no state-space reduction techniques, which makes it significantly more inefficient than advanced efficient model checkers. 

\cite{YY_EMAS22} describes an algorithm that transforms a single-agent and single-goal GOAL program to a semantically equivalent transition system. \cite{YY_23_Journal} extends the work described in \cite{YY_EMAS22}. It extends the transformation algorithm applicable for all single-agent systems, and it presents an automated model-checking process for single-agent GOAL programs using GOAL translator and Storm or PRISM.

The model-checking approaches described in \cite{bordini2003model,dennis2012model,dennis2018two,jongmans2010model,YY_23_Journal} have not specifically discussed how to handle many agent goals. 

\section{Preliminaries}
In this section, we introduce the basic concepts of \emph{vGOAL} specifications and explain the role of the goals of agents in \emph{vGOAL} semantics. Moreover, we briefly introduce the fundamentals of model checking. Since this introduction cannot be complete, we have to refer interested readers to \cite{YY_EMAS23} for more details of \emph{vGOAL}. For detailed insights into the foundational aspects of model checking, we recommend referring to \cite{baier2008principles}.

\begin{definition}(vGOAL Specifications) \cite{YY_EMAS23}\label{Def:vGOAL}\\
A vGOAL specification is defined as:
\begin{equation*}
\begin{aligned}
    vGOALSpec ::= & (MAS, K, C, A, S, P, E, D, Prob, Safety), \\
    MAS ::= & (id, B, goals, M_S, M_R)^*.
\end{aligned}
\end{equation*}
\end{definition}

A \emph{vGOAL} specification defines autonomous decision-making. The first main component is the agents' specifications, \( MAS \). Each agent's specification comprises a unique identifier (\( id \)), a belief base (\( B \)), a set of goals (\( goals \)), sent messages (\( M_S \)), and received messages (\( M_R \)).

The other specifications are system-wide:
$K$ : knowledge base, $C$: rules for enabled constraints generation,
$A$: rules for feasible action generation,
$S$: rules for sent message generation,
$P$: rules for event processing, including modifying agent goals and beliefs, and processing received messages,
$E$: action effects,
$D$: domain of all variables. 
Moreover, $Prob$ specifies the probability of possible outcomes of each action, i.e., it specifies the successful probability of each action. $Safety$ specifies all safety requirements the system has to satisfy.

\emph{vGOAL} syntax restricts it to first-order logic. All specifications, except \( E \) and \( D \), are interpreted either as a set of atoms or a first-order theory. The syntax guarantees each first-order theory has a minimal model. \( E \) is used to update the belief base through actions, and \( D \) is used to instantiate universally quantified variables.

\begin{definition} (vGOAL States)  \cite{YY_EMAS23} \label{Def:vGOALS}
A \emph{vGOAL} state is formalized as $state\text{::=} (substate)_{\times n}$, where $substate\text{::=}id\text{:}(I(B),I(goals))$.
\end{definition}

The \emph{vGOAL} $state$ of a system is formally defined as a composition of substates, $(id\text{:}(I(B),I(goals)))_{\times n}$. Each substate represents an agent with a unique identifier and the semantics of its beliefs and goals, denoted as $I(B)$ and $I(goals)$.

\begin{example} \label{ex:1}$MAS=[Agent_1,Agent_2]$\\
$Agent_1=(id_1,B_1,G_1,M_S,M_R)$, $Agent_2=(id_2,B_2,G_2,M_S,M_R)$\\
$id_1=A_1$, $B_1=[b_1,b_2,b_3]$, $G_1=[[g_1],[g_1],[g_2]]$\\
$id_2=A_2$, $B_2=[b_1,b_2,b_3]$, $G_2=[]$\\
$substate_1= A_1\text{:} (\{b_1,b_2,b_3\},\{g_1\})$, \\
$substate_2= A_2\text{:} (\{b_1,b_2,b_3\},\emptyset)$\\
$state=(substate_1,substate_2)$
\end{example}

Example \ref{ex:1} is a simple example of the \emph{vGOAL} state in a \emph{vGOAL} specification. A \emph{vGOAL} state only pertains to the agent's identifier and the interpretation of the agent's beliefs and goals. The interpretation of the agent's goals is the first goal base of the non-empty agent's goals or an empty set.
 
\begin{definition}(Operational Semantics of vGOAL) \cite{YY2024} \label{Def:OS}\\
The operational semantics of \emph{vGOAL} are defined by:
\begin{equation*}
(substate)_{\times n} \xrightarrow{Action} (substate')_{\times n},
\end{equation*}
where Action is defined in two cases:
\begin{itemize}
\item Non-deterministic: \(Action::= (id : (M_R, Act))_{\times n} \),
    \item Probabilistic: \(Action::= (id : (M_R, Act, \mathbb{P}(M_R, Act)))_{\times n} \).
\end{itemize}
\end{definition}

A \emph{vGOAL} state will be updated if any substate is updated. A substate will only be changed by either the received messages of the agent, an agent action, or both of them. Moreover, the selection of actions can be modeled as nondeterministic or probabilistic. Hence, $Action$ can be formalized in two ways. In the non-deterministic case, actions are non-deterministically chosen from a set of all possible actions. In the probabilistic case, actions are selected according to the probability distribution $P(M_R, Act)$.

\begin{definition}(Transition System) \cite{baier2008principles}\label{Def:TS}
A transition system TS is a tuple $(S, Act, \rightarrow, S_0, F, AP, L)$ where $S$ is a set of states, Act is a set of actions, $\rightarrow \subseteq S \times Act \times S$ is a transition relation, $S_0\subseteq S$ is a set of initial states, $F \subseteq S$ is a set of final states, $AP$ is a set of atomic propositions and $L: S \rightarrow 2^{AP}$ is a labeling function.
\end{definition}
\begin{definition}
(Path) \cite{baier2008principles} \label{Def: Path} A finite path fragment $\pi$ of $TS$ is a finite state sequence $s_0s_1...s_n$ such that $\forall 0\le i\le n. s_i\in Post(s_{i-1})$
, where $n\ge 0$. A maximal path fragment is a finite path fragment that ends in a terminal state. A path fragment is called initial if it starts in an initial state. A path of transition system $TS$ is an initial, maximal path fragment.
\end{definition}

\begin{definition} (Bisimulation Equivalence) \cite{baier2008principles}\label{Def:Bieq} $TS_i=(S_i, Act_i, \rightarrow_{i},S_{i0}, AP_i, L_i)$, $i=1,2$, be transition systems over $AP$. A bisimulation for $(TS_1,TS_2)$ is a binary relation $R\subseteq S_1\times S_2$, such that 
(a) $\forall s_1\in S_{10} (\exists s_2\in S_{20}. (s_1,s_2)\in R)$, and $\forall s_2\in S_{20} (\exists s_1\in S_{10}. (s_2,s_1)\in R)$, (b) for all $(s_1,s_2)\in R$, it holds (b.1) $L_1(s_1)=L_2(s_2)$; (b.2) if $s_1'\in Post(s_1)$, then there exist $s_2'\in Post(s_2)$ with $(s_1',s_2')\in R$; (b.3) if $s_2'\in Post(s_2)$, then there exist $s_1'\in Post(s_1)$ with $(s_2',s_1')\in R$.
\end{definition}

\begin{definition} (Satisfaction Relation for CTL) \cite{baier2008principles} \label{Def:CTL} Let $a\in AP$ be an atomic proposition, $TS=(S, Act, \rightarrow, S_0, F, AP, L)$ be a transition system, state $s\in S$, $\phi$ be a CTL path formula, such that (a) $s\models a$ iff $a\in L(s)$; (b) $s\models \exists \phi$ iff $\pi \models \phi$ for some $\pi \in Paths(s)$; (c) $s\models \forall \phi$ iff $\pi \models \phi$ for all $\pi \in Paths(s)$.  
\end{definition}

\begin{definition}(Discrete-Time Markov Chain) \cite{baier2008principles}\label{Def:DTMC}
A Discrete-Time Markov Chain (DTMC) is a tuple $D=(S,P, \iota_{init}, AP, L)$ where
S is a countable, nonempty set of states, $P: S \times S \rightarrow [0,1]$ is the transition probability function such that for all states s: $\sum_{s'\in S}P(s,s')=1$, $\iota_{init}:S\rightarrow [0,1]$ is the initial distribution, such that 
$\sum_{s\in S}\iota_{init}(s)=1$, and $AP$ is the set of atomic propositions and $L:S\rightarrow 2^{AP}$ is a labeling function. 
\end{definition}

\begin{definition} (Satisfaction Relation for PCTL) \cite{baier2008principles}\label{Def:PCTL} Let $a\in AP$ be an atomic proposition, $D=(S,P, \iota_{init}, AP, L)$ be a Markov chain, state $s\in S$, and $\phi$ be a PCTL path formula, such that (a) $s\models a$ iff $a\in L(s)$; (b) $s\models P_J(\phi)$ iff $Pr(s\models \phi)\in J$, where $J\in [0,1]$ and $Pr(s\models \phi)=Pr_s\{\pi \in Paths(s)|\pi \models \phi\}$.
\end{definition}

\begin{definition} (Probabilistic Bisimulation) \cite{baier2008principles}\label{Def:probBieq} Let $D=(S,P, \iota_{init}, AP, L)$ be a Markov chain. A probabilistic bisimulation on $D$ is an equivalence relation $R$ on $S$ such that for all states $(s_1, s_2)\in R$: (a) $L(s_1)=L(s_2)$, (b) $P(s_1, T)=P(s_2,T)$ for each equivalence class $T\in s/R$. States $s_1$ and $s_2$ are bisimulation-equivalent, denoted $s_1\sim_{D} s_2$, if there is a bisimulation $R$ on $D$ such taht $(s_1,s_2)\in R$.
\end{definition}

\section{Model Generation}
This section presents how to generate two faithful semantic mappings from a \emph{vGOAL} specification to its equivalent models. Following Definition \ref{Def:OS}, for a given \emph{vGOAL} specification,  the non-deterministic model is a transition system and the probabilistic model is a DTMC. 
First, we introduce two algorithms for the model generation of a \emph{vGOAL} specification. The first algorithm describes how to generate a semantically equivalent transition system from a given \emph{vGOAL} specification. The second algorithm describes how to generate a semantically equivalent DTMC from a given \emph{vGOAL} specification. Second, we provide theoretical proof of the semantic equivalence between a given \emph{vGOAL} specification and its models.  Additionally, the two algorithms are implemented in the \emph{vGOAL} translator, and its source code is available at \cite{vGOALsource}.

\subsection{Algorithms}
\subsubsection{Transition System Generation}
Algorithm \ref{Alg:1} describes the transformation from a \emph{vGOAL} specification for an autonomous system with $n$ agents to its semantically equivalent transition system. In this context, both a \emph{vGOAL} specification and a transition system adhere to the formalizations outlined in Definition \ref{Def:vGOAL} and Definition \ref{Def:TS}, respectively.Based on Definition \ref{Def:OS}, the semantics inherent in \emph{vGOAL} is naturally suitable to construct a transition system. The \emph{vGOAL} interpreter implements the semantics of \emph{vGOAL}. Leveraging the existing \emph{vGOAL} semantics implementation, we construct the transition system corresponding to a given \emph{vGOAL} specification.

\begin{algorithm}
  \caption{ Transform a \emph{vGOAL} specification to a transition system}
  \label{Alg:1}
\LinesNumbered
\SetKwInput{KwInput}{Input}                
\SetKwInput{KwOutput}{Output}              
\DontPrintSemicolon
  \KwInput{$vGOALSpec::= (MAS,K, C, A,S, P,E, D, Prob, Safety)$, $MAS::=(id, B, goals, M_S, M_R)^n$}
   \KwOutput{$(S,Act,T, S_0, F, AP, L)$}
  $S_0\leftarrow \bigcup_{id=1}^{n} \{id:(I(B),I(goals))\}$,
  $S\leftarrow \{S_0\}$,$F\leftarrow \emptyset$\; $D_{T}\leftarrow \emptyset$, 
 $S_{cur}\leftarrow \{S_0\}$, $Act\leftarrow \emptyset$,$T\leftarrow \emptyset$,$AP\leftarrow \emptyset$, $L\leftarrow \emptyset$\;
  \While{$S_{cur}\ne \emptyset$}{
   $S_{next}\leftarrow \emptyset$\;
   \For{each $\bigcup_{id=1}^{n} \{id: (I(B),I(goals))\}\in S_{next}$}{
     \If{$\forall id. I(goals)=\emptyset$}{
     $F\leftarrow F\cup \{s\}$
     }
     }
     $S_{cur}\leftarrow S_{cur} \setminus F$\;
   \For{each $s \in S_{cur}$}{
   \If{$s\notin D_T.keys()$ }{
  $L\leftarrow L\cup \{s:\emptyset\}$\;
   \For{each $Ag: (id,B,goals,M_S,M_R)\in MAS$}{
   $subS'_{id}\leftarrow \emptyset$, $subA_{id}\leftarrow \emptyset$, $subT_{id}\leftarrow \emptyset$\;
   $L[s]\leftarrow L[s] \cup \{id: MinModel(B,KB)\}$\;
   $AP\leftarrow AP \cup MinModel(B,KB)$\;
   $subs\leftarrow id:(B,I(goals))$\;
   $(subS_{id}',subA_{id}) \leftarrow  DM(Ag,K,C,A,S,P,D)$\;
   $subS'\leftarrow Update(subS',subA,E)$\;
   \For{each $(suba,subs')\in (subA_{id},subS'_{id})$ }{$subT\leftarrow subT\cup \{(subs,suba, subs')\}$
   }
   }
   $Ag\leftarrow Communication(Ag)$\;
   $S_n\leftarrow \Pi_{id=1}^n subS'_{id}$, $S\leftarrow S\cup S_n$ \;$Act_n\leftarrow \Pi_{id=1}^n subA'_{id}$, $Act\leftarrow Act\cup Act_n$ 
   \;$T_n\leftarrow \Pi_{id=1}^n subT_{id}$, $T\leftarrow T\cup T_n$ \;  
     $D_T\leftarrow D_T\cup \{s:S_n\}$,$S_{next}\leftarrow S_{next}\cup S_n$ 
   }
   \Else{
   $states\leftarrow D_T[s]$, $D_T[s]\leftarrow \emptyset$\;
   \For{each $state \in states$}{
   $s'\leftarrow \emptyset$\;
\For{$subs::=id:(I(B),I(goals))\in s'$}{
   \If{$I(goals)$ is a original goal}{
   $s'\leftarrow s'\cup \{id:(I(B),I(goals[1:]))\}$
   }
   \Else{
   $s'\leftarrow s' \cup \{subs\}$
   }
}
$S_{next}\leftarrow S_{next} \cup \{s'\}$\; $D_T[s]\leftarrow D_T[s]\cup \{s'\}$\;
\If{$s'\notin S$}{
$S\leftarrow S \cup \{s'\}$, $L\leftarrow L\cup \{s':L[s]\}$\;
$T\leftarrow T \cup \{(s, Act(s,state),s' )\}$\;
 
 $\forall id. M_R\leftarrow \emptyset$ 
}

   }
   }
    
   }
   $S_{cur}\leftarrow S_{next}$
   }
   
\KwRet $(S, Act, T, S_0, F, AP, L)$
\end{algorithm}
The initial state ($S_0$) of the transition system is determined by the semantics of the initial beliefs and goals of each agent within the autonomous systems. The transition system begins with $S$ initialized to this initial state.
Final states ($F$) are defined as states in which all agents within the autonomous system have no goals.
$D_T$ functions as a dictionary tasked with recording the current state and all its possible next states in the last reasoning cycle.
$S_{cur}$ denotes all possible states during a cycle when generating the transition system, initialized with the initial state.
$Act$ denotes all actions and communication.
$T$ is the set of all transitions.
$AP$ denotes all atomic propositions generated as state properties.
$L$ is a dictionary that serves to record state properties. $F$, $D_T$, $Act$, $T$, $AP$, and $L$ are initialized with an empty set.

Lines 3-41 describe an iterative generation process of a transition system. The generation process will terminate if all possible executions starting from the initial state are traversed. $S_{next}$ records all possible next states of all states in $S_{cur}$.Lines 5–7 describe the generation of final states. A state attains finality when each substate is devoid of goals, which corresponds to no goals for each agent.  The transition system generation process will explore all possible next states for each non-terminal state for all current states. 

The next states of each state can be calculated by 
the \emph{vGOAL} decision-making process, implemented in the \emph{vGOAL} interpreter. However, Algorithm \ref{Alg:1} only uses the \emph{vGOAL} decision-making process to calculate next states when the state has never been processed by the \emph{vGOAL} decision-making process. If the \emph{vGOAL} decision-making process has processed the state, Algorithm \ref{Alg:1} will generate the next states based on the generated next states. Lines 10–25 describe the process of generating the next state using the \emph{vGOAL} decision-making process. Specifically, Lines 11–21 detail the implementation of \emph{vGOAL} semantics, the identical implementation of the \emph{vGOAL} interpreter. According to Definition \ref{Def:OS}, agents execute actions in a modular manner, consequently updating the corresponding substate and subsequently modifying the overall state. $subS'_{id}$ records all potential next substates of $Agent_{id}$ and starts as an empty set. Simultaneously, $subA_{id}$ records all possible actions of $Agent_{id}$, and $subT_{id}$ records all possible transitions of $Agent_{id}$. Lines 26-40 outline the generation of the next states based on the previously generated next states. The \emph{vGOAL} decision-making process generates decisions based on current beliefs and the current focused goal of each agent. Consequently, the generated decisions should remain consistent with those produced previously. Deviations in the next states are only possible when transitioning from one focused goal to the next.

\subsubsection{DTMC Generation}
\begin{algorithm}
  \caption{Transform a \emph{vGOAL} specification to a DTMC in Storm}
  \label{Alg:2}
\LinesNumbered
\SetKwInput{KwInput}{Input}                
\SetKwInput{KwOutput}{Output}              
\DontPrintSemicolon
\KwInput{$vGOALSpec::= (MAS,K, C, A,S, P,E, D, Prob, Safety)$, \\$(S_1,Act,T, S_0, F, AP_{1}, L)$}
  \KwOutput{$(S_2, P, \iota_{init},AP_{2})$}
  $S_2\leftarrow S_1$, $S_0'\leftarrow S_0$, $AP_{2}\leftarrow AP_1$,  $L_2\leftarrow L_1$, $P\leftarrow \{\}$\;
  \For{each $(s_1 \times Acts \rightarrow s_2)  \in T$}{
   $p\leftarrow 1.0$\;
   \For{each $act \in Acts$}{
   \If{$act \in Prob.keys()$}{
   $p\leftarrow p* Prob[act]$
   }
   \If{$s_1 \in P.keys()$}{
   $P[s_1].append(p,s_2)$
   }
   \Else{
   $P.update(\{s_1: (p,s_2)\})$
   }
   $P\leftarrow P\cup \{s_1, p,s_2\}$
   }

  }
  $sum\leftarrow 0$\;
   \For{each $s \in P.keys()$}{
   \For{each $(p,s') \in P[s]]$}{
   $sum\leftarrow sum+p$
   }
   \If{$sum\ne 1$}{
   \For{each $(p,s') \in P[s]]$}{
   $p\leftarrow p/sum$
   }
   }
   }
   $\iota_{init}\leftarrow P[S_0']$\;
\KwRet $(S_2,P, \iota_{init},AP_2,L_2)
 $
\end{algorithm}
Algorithm \ref{Alg:2} describes how to transform a \emph{vGOAL} specification to a DTMC in Storm. Following Definition \ref{Def:OS}, the main difference between the non-deterministic model and the probabilistic model is the model of uncertainty. Therefore, we directly use the generated output of Algorithm \ref{Alg:1} to avoid unnecessary computation.

The inputs of Algorithm \ref{Alg:2} consist of the \emph{vGOAL} specification and the generated transition system of Algorithm \ref{Alg:1}. The output of Algorithm \ref{Alg:2} is the equivalent probabilistic model of the given \emph{vGOAL} specification. Following Definition \ref{Def:TS} and Definition \ref{Def:DTMC}, a transition system is a tuple $(S_1, Act, \rightarrow, S_0,F,AP_1,L_1)$, and a DTMC is a tuple $(S_2,P, \iota_{init}, AP_2, L_2)$. We can easily construct bijections from $S_1$ to $S_2$, from $S_0$ to $S_0'$, from $AP_1$ to $AP_2$, and from $L_1$ to $L_2$. 

The most important part of Algorithm \ref{Alg:2} is the probabilistic model of uncertainty, which involves $P$ and $\iota_{init}$. $P$ is initialized with an empty set. Lines 2-11 describe how to compute the probability of each transition using the probabilistic specification of each action ($Prob$). Notably, if an action has not been specified in $Prob$, the probability is calculated as 1. Lines 12-18 describe the adjustment of the probability of each transition, ensuring the sum of the probabilistic distribution of a state is 1. $\iota_{init}$ is obtained by computing the initial probabilistic distribution using $P$ and $S_0'$.

\subsection{Theoretical Proofs}
\begin{theorem}\label{thm:Correctness}
Algorithm \ref{Alg:1} generates an equivalent transition system for a given \emph{vGOAL} specification of $n$-agent autonomous system in terms of operational semantics.
\end{theorem}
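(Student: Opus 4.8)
The plan is to prove the claimed equivalence as a \emph{bisimulation} in the sense of Definition~\ref{Def:Bieq}. Let $TS_{sem}$ be the transition system whose states are vGOAL states and whose transitions are exactly the one-step relations $(substate)_{\times n}\xrightarrow{Action}(substate')_{\times n}$ of Definition~\ref{Def:OS}, and let $TS_{alg}=(S,Act,T,S_0,F,AP,L)$ be the output of Algorithm~\ref{Alg:1}. Since both systems range over states of the form $\bigcup_{id=1}^n\{id:(I(B),I(goals))\}$, the natural candidate relation is the identity on reachable states, $R=\{(s,s): s \text{ reachable from } S_0\}$, and the proof reduces to verifying conditions (a) and (b.1)--(b.3) of Definition~\ref{Def:Bieq} for $R$.

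First I would dispatch the initial-state and labeling conditions, which follow directly from the construction. Line~1 sets $S_0=\bigcup_{id=1}^n\{id:(I(B),I(goals))\}$, exactly the initial configuration determined by each agent's initial beliefs and goals, giving condition~(a). For (b.1), the labeling steps in the fresh-state branch assign $L[s]=\bigcup_{id}\{id:MinModel(B,KB)\}$, the minimal model guaranteed to exist by the restriction of vGOAL to first-order theories noted after Definition~\ref{Def:vGOAL}; since equal states induce equal minimal models, identical states carry identical labels.

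The heart of the argument is the two-way transition correspondence, (b.2) and (b.3), which I would establish by case analysis on the two branches of the main loop. In the \emph{fresh-state} branch (Lines~10--25), each agent's next substates are computed by the decision module $DM(Ag,K,C,A,S,P,D)$, updated through the action effects $E$ by $Update$, combined across agents by the product $\Pi_{id=1}^n$, and completed with message passing by $Communication$. Relying on the fact that $DM$ \emph{faithfully} realizes the single-agent decision semantics (the identical interpreter implementation), I would prove (i) \textbf{soundness}, that every tuple produced by the product is a legal joint action $(id:(M_R,Act))_{\times n}$ in the sense of Definition~\ref{Def:OS}, so every transition added to $T$ is a genuine operational step; and (ii) \textbf{completeness}, that $DM$ enumerates \emph{all} feasible single-agent actions, so the product enumerates all joint actions and matches every operational step out of $s$.

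The main obstacle --- where I expect the bulk of the work --- is justifying the \emph{cached-state} branch (Lines~26--40), in which the algorithm does not re-invoke $DM$ but instead derives the successors from the stored set $D_T[s]$ by advancing the focused goal, $I(goals)\mapsto I(goals[1:])$. Soundness of this optimization rests on a key lemma: the decision-making output depends only on an agent's current beliefs $I(B)$ and its currently \emph{focused} goal, i.e.\ the head of the goal list (Definition~\ref{Def:vGOALS}), and not on the remainder of the list, because the rule sets $C,A,S,P$ reference only $I(B)$ and $I(goals)$. Granting this lemma, revisiting a state whose focused goal is unchanged must reproduce the same feasible actions and hence the same transitions, so reusing $D_T[s]$ is both sound and complete; genuine deviations occur only when the focused goal is consumed and the list advances, which is precisely the case this branch handles explicitly. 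Finally, I would argue that the while loop is exhaustive and terminating: because the domain $D$ is finite and each agent's goals form a finite list that is monotonically consumed toward the empty-goal final states of Lines~5--7, the reachable state space is finite and the worklist $S_{cur}/S_{next}$ visits every reachable state, so $R$ relates all and only the reachable states, completing the bisimulation.
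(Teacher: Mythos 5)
Your proposal takes essentially the same route as the paper's proof: both establish bisimulation equivalence via the identity relation on states, dispatch the initial-state and labeling conditions directly from Line~1 and the \emph{vGOAL}-semantics labeling, and split the transition correspondence into a fresh-state branch (the paper's Case~2a, Lines~10--25, trusting that $DM$ faithfully implements the interpreter) and a cached-state branch (Case~2b, Lines~26--40), justified by the same key observation that decision-making depends only on current beliefs and the currently focused goal, so cached successors differ only by advancing the goal list. The paper organizes condition (b.2) as an induction on the path position $k$ and obtains (b.3) by symmetry, whereas you frame it as a per-state case analysis plus worklist exhaustiveness with an explicit termination remark, but these are presentational differences rather than a different argument.
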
 
\begin{proof} Following Definition \ref{Def:vGOAL}, the given \emph{vGOAL} specification is defined as $vGOALSpec::=(MAS,K, C, A,S, P,E, D, Prob, Safety)$ and $MAS::=(id, B, goals, M_S, M_R)^n$, we denote the semantics of $vGOALSpec$ as $TS_1$; we denote the transition system generated by Algorithm 1 as $TS_2$. Following Definition \ref{Def:TS}, $TS_1$ and $TS_2$ are in the form $TS_i::=(S_i, Act_i, \rightarrow_{i}, S_{i0}, F_i, AP_i, L_i), i\in \{1,2\}$. We show $TS_1$ is bisimulation-equivalent to $TS_2$, denoted $TS_1 \sim TS_2$, and $\sim$ is an equivalence relation, denoted $R_{id}$,therefore, $TS_1 \equiv TS_2$. 

Following Definition \ref{Def:Bieq}, we prove the condition $a$ of the bisimulation equivalence. Following Definition \ref{Def:vGOALS}, the initial state of the given \emph{vGOAL} specification is the semantics of each agent's beliefs and goals, denoted as $S_{10}=\bigcup_{id=1}^n\{id: (I(B), I(goals))\}$. According to Line 1 in Algorithm \ref{Alg:1}, $S_{20}=\bigcup_{id=1}^n\{id: (I(B), I(goals))\}$. As $S_{10}=S_{2,0}$, $(S_{10},S_{20})\in R_{id}$ holds for the initial state.

Second, we show the bisimulation with equivalence relation on paths. $Paths(s_i)(i\in\{1,2\})$ follows Definition \ref{Def: Path}. For each finite path $\pi_1=s_{10}s_{11}s_{12}...\in Paths(s_1)$, there exists a path $\pi_2=s_{20}s_{21}s_{22}...\in Paths(s_2)$ of the same length such that $(s_{1k},s_{2k})\in R_{id}$ for all $k$. Following Definition \ref{Def:vGOALS}, $s_{1k}::=\bigcup_{id=1}^n\{id: (I(B_k),$ $I(goals_k))\}$. Following Definition \ref{Def:Bieq}, we prove the condition $b.2$ by induction on k.  For each case, we distinguish between $s_i$ being a terminal state or not. \\
\textbf{Base case ($k=0$):} $s_1=s_{10}$, $s_2=s_{20}$,
$(s_1,s_2)\in R_{id}$ \\
\textbf{Case 1:} In case $s_1$ is a terminal state, that means all agents of the autonomous system defined by $vGOALSpec$ have no goals to achieve. In this case, \emph{vGOAL} interpreter will not generate any decisions. Algorithm \ref{Alg:1} will also not generate any decisions for $s_2$, because the state will be considered as a final state of $TS_2$. \\
\textbf{Case 2:} In case $s_1$ is not a terminal state, 
$s_{1}$ is updated by the \emph{vGOAL} semantics, $s_{10}\rightarrow s_{11}$. As $s_2=s_{20}$ is an initial state, it has goals to achieve. Algorithm 1 will also generate the next state of $s_{2}$ using the \emph{vGOAL} semantics. Therefore, $s_{20}\rightarrow s_{21}$, $(s_{11}, s_{21})\in R_{id}$, and $s_{11}=s_{21}$.  \\
\textbf{Induction step:} Assume $k\ge 0$, and the path fragment $s_{20}s_{21}...s_{2k}$ is already constructed with $\forall 0\le i\le k. (s_{1i},s_{2i})\in R_{id}$.\\
\textbf{Case 1:} In case $s_{1k}$ is a terminal state, we have $\forall id. I(goals_k)=\emptyset$, $s_{2k}$ will also be a final state according to Lines 6-7 in Algorithm \ref{Alg:1}.\\
\textbf{Case 2:} In case $s_{1k}$ is not a terminal state, we have $\exists id. I(goals_k)\ne \emptyset$.   $\pi_1$ is extended by $s_{1k}s_{1(k+1)}$ based on the \emph{vGOAL} semantics. As $(s_{1k},s_{2k})\in R_{id}$, $s_{2k}$ is a non-final state of $TS_2$. Algorithm \ref{Alg:1} generates the next state of $s_{2k}$ based on two cases.\\
\textbf{Case 2a:} If $(s_{2k},s_{2(k'+1)})\notin D_T$,
which means $s_{2k}$ has never been processed by the implementation of the \emph{vGOAL} semantics during the transition system generation, it will be processed by the implementation of the \emph{vGOAL} semantics, described by lines 10-25. $\pi_2$ is extended by $s_{2k}s_{2(k+1)}$ based on \emph{vGOAL} semantics. Therefore, $(s_{1(k+1)}, s_{2(k+1)}\in R_{id}$. \\
\textbf{Case 2b:} If $(s_{2k},s_{2(k'+1)})\in D_T$,
$s_{2k}$ has been processed by the \emph{vGOAL} semantics during the transition system generation. Algorithm \ref{Alg:1} will generate the next state based on the previous generation results, described by lines 26-40. The previous generation certainly produced a path fragment with $s_{2k'}s_{2(k'+1)}$, where $s_{2k}=s_{2k'}$. The \emph{vGOAL} semantics is updated only based on the state for one \emph{vGOAL} specification. The decision changes the beliefs, which possibly achieves the current goal, subsequently shifting the focused goal to the next goal. $s_{2(k+1)}$ and $s_{2(k'+1)}$ will be the same except the current focused goal of any agent is shifted to the next.
Therefore, it is necessary to modify all agent goals to get the right next state of $s_{2k}$. We denote a substate of $s_{1k}$ for agent $i$ as $sub_{1(k)i}$, where $1\le i\le n$. The sub path fragment is denoted as $sub_{1(k)i}sub_{1(k+1)i}$, we prove $(s_{1(k+1)},s_{2(k+1)})\in R_{id}$ by proving $\forall 1\le i\le n. (sub_{1(k+1)i},sub_{2(k+1)i})\in R_{id}$.\\
\textbf{Case 2b.1:} If $I(goals_{1(k+1)i}))\notin I(goals_{10i})$, the generated decision does not result in the shift of the current focus goal. Based on Lines 33-34, $sub_{2(k+1)i}=sub_{2(k'+1)i}$, the substate update only depends on the \emph{vGOAL} semantics. $(sub_{1(k+1)i},sub_{2(k+1)i})\in R_{id}$ holds for this case.\\
\textbf{Case 2b.2:} If $I(goals_{1(k+1)i}))\in I(goals_{10i})$, the generated decision results in the shift of the current focused goal. Based on Lines 31-32, $sub_{2(k+1)i}=id: (I(B_{2(k+1)}), I(goals_{2(k'+1)}[1:]))$. This modification calibration with the \emph{vGOAL} semantics update for the two states, $s_{2k}$ and $s_{2k'}$ where $I(B_{2k})=I(B_{2k'})$, $I(goals_{2k})=I(goals_{2k'})$, and $goals_{2k}\ne goals_{2k'}$.

By symmetry, for each path $\pi_2\in Paths(s_2)$, there exists a path $\pi_1\in Paths(s_1)$ of the same length which is statewise related to $\pi_2$. Therefore, following Definition \ref{Def:Bieq}, the condition $b.3$ of the bisimulation equivalence is proven. Moreover, the state properties of $TS_1$ and $TS_2$ are generated by \emph{vGOAL} semantics updates, and the condition $b.1$ of the bisimulation equivalence is proven.
\end{proof}

\begin{theorem}\label{thm:Correctness2}
Algorithm \ref{Alg:2} generates an equivalent DTMC for a given \emph{vGOAL} specification of $n$-agent autonomous system in terms of operational semantics.
\end{theorem}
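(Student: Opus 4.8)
The plan is to mirror the structure of the proof of Theorem \ref{thm:Correctness}, but now establishing probabilistic bisimulation equivalence (Definition \ref{Def:probBieq}) in place of the nondeterministic bisimulation of Definition \ref{Def:Bieq}. Let $D_1$ denote the DTMC induced by the probabilistic operational semantics of the given $vGOALSpec$ (Definition \ref{Def:OS}, probabilistic case), and let $D_2 = (S_2, P, \iota_{init}, AP_2, L_2)$ denote the DTMC returned by Algorithm \ref{Alg:2}. I would exhibit an equivalence relation $R$ (the state-identity relation $R_{id}$ used in Theorem \ref{thm:Correctness}) and verify the two conditions of Definition \ref{Def:probBieq}: equal labeling, and matching cumulative transition probability into each equivalence class.

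First I would dispose of the structural part by reusing Theorem \ref{thm:Correctness}. Algorithm \ref{Alg:2} does not recompute states, initial states, atomic propositions, or labels: its first line sets $S_2 \leftarrow S_1$, $S_0' \leftarrow S_0$, $AP_2 \leftarrow AP_1$, and $L_2 \leftarrow L_1$, where $(S_1, Act, T, S_0, F, AP_1, L_1)$ is exactly the transition system produced by Algorithm \ref{Alg:1}. Since Theorem \ref{thm:Correctness} already yields $TS_1 \sim TS_2$ witnessed by $R_{id}$, the same relation relates $D_1$ and $D_2$ state-wise, and condition (a) of Definition \ref{Def:probBieq}, namely $L(s_1) = L(s_2)$, is inherited verbatim. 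Thus the entire burden of the proof reduces to condition (b): that the mass assigned by $P$ agrees with the mass prescribed by the operational semantics.

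The core argument concerns the probability function. By the probabilistic case of Definition \ref{Def:OS}, a transition out of a state carries the joint distribution $\mathbb{P}(M_R, Act)$ over the combined agent actions; under the stated reading of $Prob$ as independent per-action success probabilities, this joint mass is the product $\prod_{act \in Acts} Prob[act]$, which is precisely the quantity $p$ accumulated in Lines 2--11 of Algorithm \ref{Alg:2} (with the convention that an unspecified action contributes a factor of $1$). I would then show that, after the normalization of Lines 12--18, the resulting $P[s]$ is a genuine stochastic vector, i.e. $\sum_{s'} P(s, s') = 1$ as required by Definition \ref{Def:DTMC}, and that normalization preserves the relative weights and hence reproduces $\mathbb{P}(M_R, Act)$ on each successor; summing over successors lying in a common equivalence class then gives $P(s_1, T) = P(s_2, T)$ for every class $T$. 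Finally, $\iota_{init} \leftarrow P[S_0']$ supplies the correct initial distribution, discharging the initial-distribution requirement of Definition \ref{Def:DTMC}.

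The main obstacle I anticipate is the normalization step. One must argue that rescaling the raw products $\prod Prob[act]$ by their per-state sum does not distort the distribution intended by the operational semantics, that is, that the semantics itself defines $\mathbb{P}(M_R, Act)$ up to exactly this normalization, so that Algorithm \ref{Alg:2} recovers it rather than some other renormalized measure. This requires making precise the relationship between the per-action success probabilities in $Prob$ and the joint distribution $\mathbb{P}(M_R, Act)$ of Definition \ref{Def:OS}, and checking that when only a strict subset of enabled joint actions appears among the generated transitions (so the raw masses sum to less than one), conditioning on the feasible transitions is the intended semantics. Once this calibration is established, both conditions of Definition \ref{Def:probBieq} hold, giving $D_1 \sim_{D} D_2$ and hence the claimed operational-semantics equivalence.
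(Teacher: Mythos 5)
Your proposal takes essentially the same route as the paper's proof: both reduce the structural components (states, initial state, atomic propositions, labeling) to Theorem \ref{thm:Correctness} via the assignments in Line 1 of Algorithm \ref{Alg:2}, and then verify conditions (a) and (b) of Definition \ref{Def:probBieq} by matching each probabilistic transition $(s_1, p, s_2)$ of $D_1$ with the corresponding transition $(s_1, Acts, s_2)$ of $TS_1$, with $p$ determined by $Acts$ and $Prob$. The one difference is that you are more careful than the paper: its proof never mentions the normalization of Lines 12--18 or the calibration between the per-action success probabilities in $Prob$ and the joint distribution $\mathbb{P}(M_R, Act)$ of Definition \ref{Def:OS}, so the obstacle you flag is a genuine omission in the published argument rather than a defect of your approach.
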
 
\begin{proof}
Following Definition \ref{Def:vGOAL}, the given \emph{vGOAL} specification is defined as $vGOALSpec::=(MAS,K, C, A,S, P,E, D, Prob, Safety)$ and $MAS::=(id, B, goals, M_S, M_R)^n$, we denote the probabilistic semantics of $vGOALSpec$ as $D_1$; we denote the DTMC generated by Algorithm 1 as $D_2$. Following Definition \ref{Def:DTMC}, $D_1$ and $D_2$ are in the form $D_i::=(S_i,P, \iota_{init}, AP_i, L_i), i\in \{1,2\}$. We show $D_1$ is probabilistic bisimulation-equivalent to $D_2$, denoted $D_1 \sim_{D} D_2$, and $\sim_{D}$ is an equivalence relation for Markov chains, denoted $R_{id}$,therefore, $D_1 \equiv D_2$. \\
Algorithm \ref{Alg:2} computes the DTMC based on the generated transition system of Algorithm \ref{Alg:1}. Therefore, $D_i$ and $TS_i$ ($i=\{1, 2\}$) are equivalent except the model of uncertainty. According to Theorem \ref{thm:Correctness}, $TS_1\equiv TS_2$, $D_1$ and $D_2$ are equivalent on  $(S_i,AP_i, L_i), i\in \{1,2\}$. Hence, to show $D_1\equiv D_2$, we only need to show the probabilistic bisimulation between $D_1$ and $D_2$. \\
Following Definition \ref{Def:probBieq}, we need to prove condition $a$ and condition $b$ to show the equivalence. Each state of the \emph{vGOAL} specification has a bijective state of $D_2$. Consequently, condition $a$ is proven. Each states in $D_1$ have a bijective state in $D_2$. Therefore, each probabilistic transition in $D_1$ has a bijective probabilistic transition in $D_2$. The form of each probabilistic transition in $D_1$ is $(s_1 , p, s_2)$, which has a corresponding transition in $TS_1$ in the form $(s_1 , Acts, s_2)$. $p$ is determined by the $Acts$ and $Prob$. For each transition in $TS_1$, there is a corresponding equivalent transition in $TS_2$. Each probabilistic transition in $D_2$ is determined by its corresponding transition in $TS_2$ and the probability specifications of each action.  Hence the condition $b$ is proven.
\end{proof}

\section{Quick Error Detection}
This section presents two algorithms that are used to quickly detect errors for a \emph{vGOAL} specification, which involves many agents and each agent has multiple goals. The hypothesis of the quick detection algorithms is a finite set of different goals for each agent. 
\subsection{Algorithms}
\begin{algorithm}
  \caption{Reduce a \emph{vGOAL} specification for a one-agent system}
  \label{Alg:3}
\LinesNumbered
\SetKwInput{KwInput}{Input}                
\SetKwInput{KwOutput}{Output}              
\DontPrintSemicolon
\KwInput{$vGOALSpec_1::= (MAS_1,K, C, A,S, P,E, D)$, $MAS::=(id, B, goals_1, M_S, M_R)$, $goals_1::= goal_1 goal_2...goal_{n-1}goal_n$, $Goals::=\{G_1,...,G_m\}$ $\forall 1\le j\le n. goal_j\in Goals$, $1\le n$, $1\le m$}
  \KwOutput{$vGOALSpec_2::= (MAS_2,K, C, A,S, P,E, D)$, $MAS_2::=(id, B, goals_2, M_S, M_R)$}
$vGOALSpec_2\leftarrow vGOALSpec_1$,
$goals_2\leftarrow goal_1$\;
$goals_1\leftarrow goal_2...goal_{n-1}$,
$D_{G}\leftarrow \emptyset$\;
\For{each $goal_i\in goals_1$}{
\If{$Goals\ne D_G$}{\If{$goal_i\notin D_G$}{
$goals_2\leftarrow goals_2 goal_i$\;
 $D_G \leftarrow D_G \cup \{goal_i\}$
}}
}

\KwRet $vGOALSpec_2$
\end{algorithm}

Algorithm \ref{Alg:3} simplifies a \emph{vGOAL} specification for a one-agent system. The given input is a \emph{vGOAL} specification for a one-agent system. The agent has many goals. The output is also a \emph{vGOAL} specification. The difference between the input and output is only the goal specification.  $goals_1$ is the goal specification of the input, and $goals_2$ is the goal specification of the output. $Goals$ stores different goals. Each $goal_i$ is an element of $Goals$.

\begin{algorithm}
  \caption{Detect errors of a \emph{vGOAL} specification for a $n$-agent system with multiple goals}
  \label{Alg:4}
\LinesNumbered
\SetKwInput{KwInput}{Input}                
\SetKwInput{KwOutput}{Output}              
\DontPrintSemicolon
\KwInput{$vGOALSpec_1::= (MAS_1,K, C, A,S, P,E, D)$, $MAS::=(i, B, goals_i, M_S, M_R)^n$, $goals_{i}::= goal_{i1} goal_{i2}...goal_{in_i}$, $1\le i\le n_i$,
$Goals::=\{G_1,...,G_m\}$ $\forall 1\le j\le n_i. goal_{ij}\in Goals$, $1\le n$, $1\le m$, $P_{safe}$}
  \KwOutput{$Errors$}
  $D\leftarrow \bigcup_{i=1}^n\{i:\emptyset\}$, $Errors\leftarrow \emptyset$\;
\For{each $MAS_i\in MAS$ }{
$Spec\leftarrow (MAS_i, K, C, A,S, P,E, D)$\;
$Spec_1\leftarrow Algorithm_3(Spec)$
\;
$TS_1\leftarrow Algorithm_1(vGOALSpec)$\;
\If{$TS_1\not\models P_{safe}$}{
$Errors\leftarrow Errors\cup \{(i,TS_1\not\models P_{safe})\}$
}
\For{each $goal\in goals$}{
\If{$TS_1\not\models \lozenge goal$}{
    $Errors\leftarrow Errors\cup \{(i,TS_1\not\models \lozenge goal)\}$
} 
}

}
$Goals\leftarrow \Pi_{i=1}^n|D_G[i]|$\;
\For{each $(g_1,...,g_n)\in Goals$}{
$vGOALSpec_2\leftarrow vGOALSpec_1$\;
\For{each $agent_i\in MAS$}{
$goals_i\leftarrow g_i$
}
$TS_n\leftarrow Algorithm_1(vGOALSpec_2)$\;
\If{$TS_n\not\models P_{safe}$}{
$Errors\leftarrow Errors\cup \{((g_1,...,g_n), TS_n\not\models P_{safe})\}$
}
\If{$TS_n\not\models \lozenge \bigwedge_{i=1}^n goal_i$
}{
$Errors\leftarrow Errors\cup \{TS_n\not\models \lozenge \bigwedge_{i=1}^n g_i\}$
}

}
\KwRet $Errors$
\end{algorithm}

Algorithm \ref{Alg:4} detects errors of a \emph{vGOAL} specification for a $n$-agent system that each agent has multiple goals by model checking safety properties and liveness properties. First, it employs Algorithm \ref{Alg:3} to generate a set of \emph{vGOAL} specifications for a one-agent system. Algorithm \ref{Alg:1} transforms each generated specification to a transition system. 
Second, it generates a set of \emph{vGOAL} specifications for a $n$-agent system, in which each agent has one goal. The generated \emph{vGOAL} specifications for a $n$-agent system cover all possibilities of the goal combination of all agents within the autonomous systems. Safety checking is applied to each generated transition system. Liveness checking is used to check if all agent goals will be finally achieved. $Errors$ records all property violations during all model-checking processes.

\subsection{Theoretical Proofs}
\begin{theorem}\label{Thm:one-agent}Given a \emph{vGOAL} specification for a one-agent system, $vGOALSpec_{1}$, where $goals_1::=goal_1...goal_n$,  Algorithm \ref{Alg:3} reduces $vGOALSpec_{1}$ to $vGOALSpec_{2}$, where $goals_1::=goal_1...goal_k$.
Algorithm \ref{Alg:1} generates a transition systems $TS_1$ with the state space $S_1$ for $vGOALSpec_{1}$, and $TS_2$ with the state space $S_2$ for $vGOALSpec_{2}$. $S_1=S_2$. 
\end{theorem}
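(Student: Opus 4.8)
The plan is to exploit the fact that, by Definition~\ref{Def:vGOALS}, a one-agent vGOAL state records only the beliefs together with the \emph{currently focused} goal $I(goals)$ (the first goal base), and \emph{not} the remaining goal list. Hence the reachable state space decomposes by focused goal as
\[
S=\{(I(B),\emptyset)\}\cup\bigcup_{g}\{(I(B'),g)\mid (I(B'),g)\text{ reachable while }g\text{ is focused}\},
\]
where $g$ ranges over the goals that ever become focused. The first step is to characterize the output of Algorithm~\ref{Alg:3}: $goals_2$ is the subsequence of $goals_1$ obtained by keeping, in order, the first occurrence of each distinct goal, so that $goals_1$ and $goals_2$ induce exactly the same \emph{set} of focusable goals $\subseteq Goals$. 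The theorem then reduces to the claim that neither deleting repeated goals nor the order in which distinct goals are first encountered changes the total set of reachable (belief, focused-goal) pairs.

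I would prove $S_1=S_2$ by two inclusions, tracking the \emph{phase structure} of a run of Algorithm~\ref{Alg:1}: a run proceeds through consecutive phases, each focusing on one goal $g$ until $g$ is achieved, whereupon the focus shifts and the beliefs at the moment of achievement become the \emph{entry beliefs} of the next phase. Both systems start their first phase by focusing $goal_1$ from the common initial beliefs $B$, so that phase is identical. The subtlety, for \emph{both} inclusions, is that deduplication can delete a goal occurring in the middle of the sequence: after the deletion the surviving goals are entered from different beliefs than in the original run. For example, $goals_1=g_1g_2g_1g_3$ is reduced to $goals_2=g_1g_2g_3$, and in $TS_2$ the goal $g_3$ is entered from the post-$g_2$ beliefs, whereas in $TS_1$ it is entered from the beliefs achieved after the second $g_1$. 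To match phases focusing the same goal across the two systems, I must argue that the states $(I(B'),g)$ contributed by a phase depend only on $g$, not on the beliefs at which that phase is entered.

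The crux --- and the step I expect to be the main obstacle --- is therefore a lemma asserting that, for a fixed focused goal $g$, the set of belief states reachable while $g$ remains focused is \emph{independent of the (reachable) entry beliefs}; equivalently, every $(I(B'),g)$ the agent can occupy while pursuing any occurrence of $g$ is already reachable while pursuing the first occurrence of $g$ from $B$. Given this lemma, each distinct goal contributes one and the same block of states to $S_1$ and to $S_2$, the blocks for the empty goal and the initial beliefs coincide trivially, and both inclusions follow. I would attempt the lemma by appealing to the state-based nature of the decision process in Definition~\ref{Def:OS}: the feasible actions (rules $A$) and their effects (rules $E$) are functions of the current state alone, so the set of $(I(B'),g)$ reachable with $g$ focused is a least fixpoint determined solely by $g$ and the fixed specification components $K,C,A,S,P,E,D$. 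The delicate point to check is exactly that this fixpoint does not shrink or grow when the phase is entered from a different reachable belief state --- i.e.\ that the goal-directed reachable set is genuinely entry-independent rather than merely entry-monotone --- and this is where the proof's real content, and any hidden hypothesis on the specification, must reside.
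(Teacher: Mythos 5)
Your proposal is, in substance, the paper's own argument made explicit, so the useful comparison is about what each formulation exposes. The paper proves Theorem \ref{Thm:one-agent} by induction on $n$: for the $(n{+}1)$-st goal it distinguishes the case where $goal_{n+1}$ is new (then $goals_2$ is extended too, and Algorithm \ref{Alg:1} is said to ``go through the same transition system generation process for the same state until the current focused goal is shifted'') from the case where $goal_{n+1}$ duplicates some earlier $goal_i$ (then ``the generation process for $goal_{n+1}$ is the same as the generation process for $goal_i$'' and ``no new states will be generated''). Both of these case assertions are precisely your crux lemma---that the set of pairs $(I(B'),g)$ visited while $g$ is focused depends only on $g$ and the fixed components $K,C,A,S,P,E,D$, not on the beliefs with which the phase is entered---and the paper asserts it without proof. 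Your reorganization (decomposing $S$ by focused goal, arguing two inclusions over per-goal blocks) buys a cleaner statement of exactly what must be shown, and your $g_1g_2g_1g_3$ example correctly observes that even the paper's ``new goal'' case needs entry-independence, since after a deletion the surviving goals are entered from different belief states in $TS_1$ and $TS_2$; the paper's induction crosses the same bridge silently. So you have not missed an idea that the paper supplies: the entry-independence lemma is left undischarged there as well, and your suspicion of a hidden hypothesis is warranted---nothing in Definition \ref{Def:OS} forces the goal-directed reachable set, or the beliefs holding at the moment $g$ is achieved, to be independent of the entry beliefs, and absent such an assumption (or a verification of it for the class of specifications considered, as in the case study where each phase ends in a canonical post-delivery configuration) neither your inclusions nor the paper's inductive step goes through. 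The only other remark is cosmetic: your reading of Algorithm \ref{Alg:3} as keeping the first occurrence of each distinct goal in order matches the paper, and your base phase (both systems focusing $goal_1$ from the common initial beliefs) coincides with the paper's base case.
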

\begin{proof} We prove the theorem by induction on $n$. \\
\textbf{Base case (n=1)} $S_{10}::=(I(B),I(goals_1))$, $S_{20}::=(I(B),$ $I(goals_2))$, $I(goals_1)=I(goals_2)$. Algorithm \ref{Alg:1} will go through the same transition system generation process for the same initial state until the current focused goal is shifted from the initial state. Therefore,  $S_1= S_2$. \\
\textbf{Induction Step} Assume $n\ge 0$, $S_1= S_2$. For the $n+1$-th goal of $goals_1$, if $\forall 2\le i\le n. goal_i\ne goal_{n+1}$, $goals_2$ is also extended by $goal_{n+1}$. Algorithm \ref{Alg:1} will go through the same transition system generation process for the same state until the current focused goal is shifted. Therefore,  $S_1= S_2$. If $\exists 2\le i\le n. goal_i= goal_{n+1}$, $goals_2$ will not be extended by $goal_{n+1}$. Algorithm \ref{Alg:1} will extend $TS_1$ to achieve $goal_{n+1}$, but the generation process for $goal_{n+1}$ is the same as the generation process for $goal_{i}$. No new states will be generated in this case. $S_1$ will not be modified by the generation process. Therefore, $S_1=S_2$.
\end{proof}

\begin{theorem} \label{thm:Error}
Given a \emph{vGOAL} specification for a $n$-agent system with multiple goals, $vGOALSpec$, Algorithm \ref{Alg:4} processes it to generate $Errors$ as the output. $Errors\ne \emptyset$ implies $vGOALSpec$ is not a sound specification that can successively achieve all goals of each agent without violating all safety properties. 
\end{theorem}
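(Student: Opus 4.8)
The plan is to assume $Errors\ne\emptyset$ and derive, by a case analysis on the shape of the offending entry, that $vGOALSpec$ fails one of the two soundness conditions recalled in the introduction: call them (S1) every goal of every agent is eventually achieved, and (S2) no reachable configuration violates $P_{safe}$. By inspection of Algorithm \ref{Alg:4}, every element that can be inserted into $Errors$ has exactly one of four forms: a single-agent safety violation $TS_1\not\models P_{safe}$, a single-agent liveness violation $TS_1\not\models\lozenge goal$, an $n$-agent safety violation $TS_n\not\models P_{safe}$, or an $n$-agent liveness violation $TS_n\not\models\lozenge\bigwedge_{i=1}^n goal_i$. It therefore suffices to treat these four cases and exhibit a failure of (S1) or (S2) in each.

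For the two single-agent cases I would first invoke Theorem \ref{Thm:one-agent}: the reduction carried out by Algorithm \ref{Alg:3} leaves the reachable state space unchanged ($S_1=S_2$), so the transition system checked for agent $i$ has precisely the reachable states and labels of that agent's full single-agent specification. Combining this with Theorem \ref{thm:Correctness}, which guarantees that Algorithm \ref{Alg:1} emits a transition system bisimulation-equivalent to the operational semantics, the model-checking verdict on $TS_1$ transfers faithfully to the genuine \emph{vGOAL} behavior of agent $i$. Hence $TS_1\not\models P_{safe}$ exhibits an actual decision sequence reaching an unsafe configuration, contradicting (S2); and $TS_1\not\models\lozenge goal$ exhibits a distinct goal that is never achieved along some maximal path, contradicting (S1). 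Either way $vGOALSpec$ is not sound.

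For the two $n$-agent cases I would argue that each enumerated combination $(g_1,\dots,g_n)$ is a legitimate focus-goal configuration the full multi-goal system is obliged to handle: since \emph{vGOAL} agents pursue their goal lists one focused goal at a time (Definition \ref{Def:vGOALS}), every simultaneous choice of focused goals for the $n$ agents is one of these combinations, and the specification can be sound only if it is both safe and jointly goal-achieving for every such choice. Applying Theorem \ref{thm:Correctness} once more, $TS_n$ faithfully models the operational semantics of the combination specification, so $TS_n\not\models P_{safe}$ yields a reachable unsafe joint configuration (contradicting (S2)) and $TS_n\not\models\lozenge\bigwedge_{i=1}^n goal_i$ yields a joint goal assignment the agents cannot collectively reach (contradicting (S1)). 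Thus in all four cases $Errors\ne\emptyset$ forces the failure of (S1) or (S2), which is exactly the claim.

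The hard part will be the faithfulness of the $n$-agent decomposition. Unlike the single-agent reduction, which is underwritten by the exact state-space identity of Theorem \ref{Thm:one-agent}, the single-goal combinations are not literally lock-step snapshots of the full multi-goal run, so I must make precise the sense in which a violation on some combination lifts to the full specification. The cleanest route I see is to note that each combination specification is obtained from $vGOALSpec$ purely by truncating every agent's goal list to a single entry, an operation that only removes future focus shifts and introduces no new decisions; a violating path of $TS_n$ then embeds, through the bisimulation of Theorem \ref{thm:Correctness}, into a realistic circumstance that soundness quantifies over. Pinning down this embedding rigorously, rather than relying on the informal "covers all combinations" intuition, is the step that warrants the most care.
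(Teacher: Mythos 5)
Your proposal is correct and follows essentially the same route as the paper: a four-way case analysis on the form of the entry in $Errors$, concluding in each case that one of the two soundness conditions (goal achievement or safety) fails. If anything, the paper's own proof is terser than yours---it directly asserts each of the four implications without invoking Theorem \ref{Thm:one-agent} or Theorem \ref{thm:Correctness}, and it never addresses the lifting of $n$-agent single-goal combination violations back to the full multi-goal specification, which you rightly flag as the delicate step---so your added scaffolding only makes explicit what the paper leaves implicit.
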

\begin{proof} $Errors\ne \emptyset$ implies at least one error is detected during the model-checking process. If $(i,TS_1\not\models P_{safe})\in Errors$,
$vGOALSpec$ will generate unsafe decisions when successively achieving all goals of $agent_i$. If $(i,TS_1\not\models \lozenge goal)\in Erros$, $vGOALSpec$ cannot generate decisions to achieve $goal$ for $agent_i$. If $\{((g_1,...,g_n), TS_n$ $\not\models P_{safe})\}\in Errors$, $vGOALSpec$ will generate unsafe decisions when generating decisions to simultaneously to achieve $g_i$ for $agent_i$. If $\{TS_n$ $\not\models \lozenge \bigwedge_{i=1}^n g_i\}\in Errors$, $vGOALSpec$ cannot generate decisions for an $n$-agent system to achieve each goal of each agent if the goals combination is $(g_1,...,g_n)$. Any error of these four types makes the \emph{vGOAL specification} not sound.
\end{proof}

\section{Empirical Analysis}
This section illustrates the efficiency and the usage of the automated CTL and PCTL model-checking process of \emph{vGOAL}. First, we briefly introduce the case study and its scenarios for experiments. Second, we present how to use the CTL and PCTL model-checking process to verify a \emph{vGOAL} specification. Finally, we show how to quickly detect errors in a complicated \emph{vGOAL} specification. 
We use a real-world autonomous logistic system, involving three autonomous mobile robots, to conduct an empirical analysis. All experiments are conducted with a MacBook Air 2020 with an Apple M1 and 16GB of RAM. All \emph{vGOAL} specifications and experiment results described in this section are available at \cite{vGOALsource}.

\subsection{Case Study}
The autonomous logistic system is expected to work in a warehouse environment, shown in Figure \ref{fig:layout}. It aims to autonomously and continuously transport workpieces from two pick-up stations ($P_3$ and $P_4$) to one drop-off station ($P_2$) while adhering to all safe requirements. Each agent can have many transportation goals, but it only can have two different transportation goals, namely transporting a workpiece from $P_3$ to $P_2$ or from $P_4$ to $P_2$. Three agents collaboratively deliver workpieces, but they also compete for permission for locations to avoid collision among agents. 

\begin{figure}[h]
    \centering
    \includegraphics[scale=0.5]{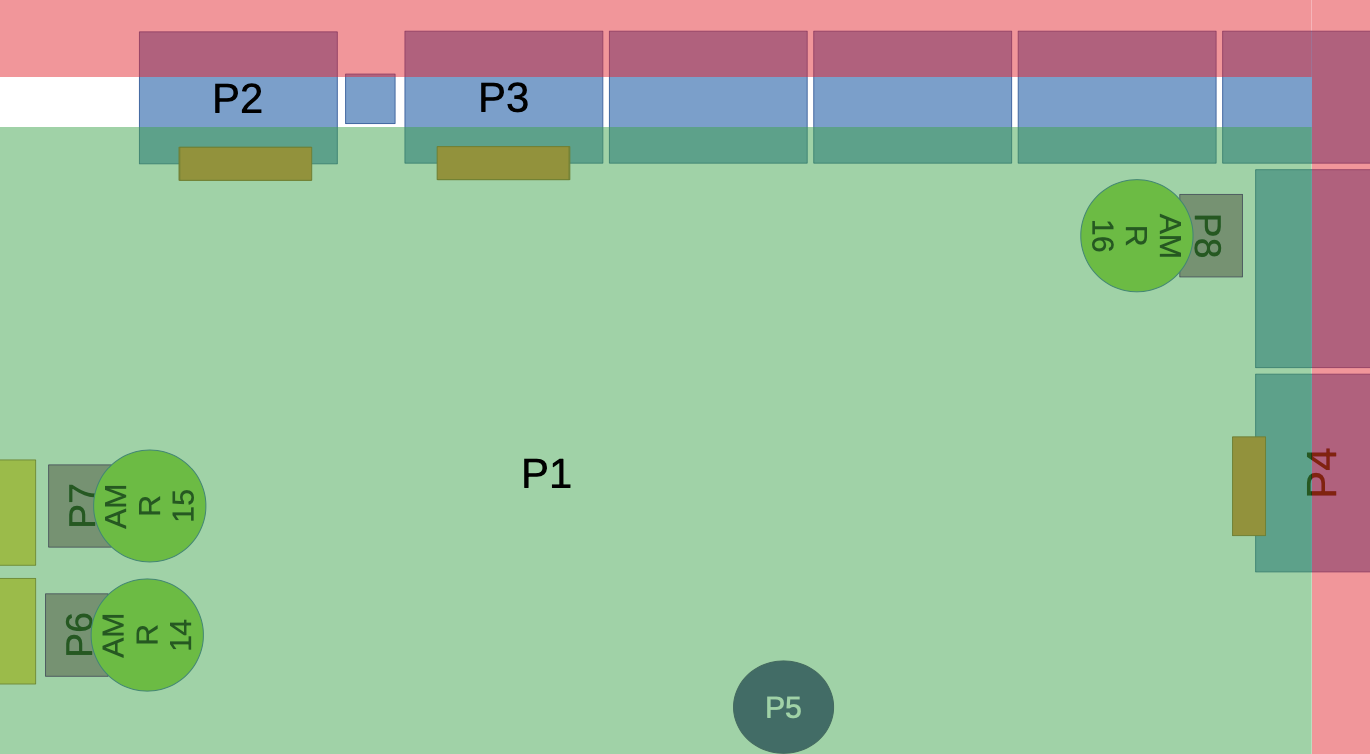}
    \caption{Layout of the Warehouse Environment}
    \label{fig:layout}
\end{figure}

In this case study, each agent can perform four actions: $pick$, $drop$, $charge$, and $move$. Each action can succeed or fail. For the two actions: $pick$ and $drop$, and $charge$, their failure is not fatal for the agent. For the $charge$ action, its failure will lead to the failure of the corresponding agent. For the $move$ action, its failure can have possible consequences. The first possible consequence is a docking error, which is not fatal to the agent. The second possible consequence is a move-base error, which leads to a system crash. 

We use \emph{vGOAL} to specify the decision-making mechanism of the autonomous logistic systems. The agent specifications are listed as follows: 
\begin{lstlisting}[xleftmargin=2.0ex]
beliefs1 = ["at(6)", "battery(1)", "docked(6)", "assigned(6)"]
beliefs2 = ["at(7)", "battery(1)","docked(7)","assigned(7)"]
beliefs3 = ["at(8)", "battery(1)","docked(8)","assigned(8)"]
beliefs4 = ["idle(2)", "idle(3)", "idle(4)","idle(5)", 
"reserved(A1,6)", "reserved(A2,7)", "reserved(A3,8)"]
g1 = ['transport(3,2)']
g2 = ['transport(4,2)']
goals4 = []
Agent1 = Agent("A1", beliefs1, goals1)
Agent2 = Agent("A2", beliefs2, goals2)
Agent3 = Agent("A3", beliefs3, goals3)
Resource = Agent("R", beliefs4, goals4)
System=[A1, A2, A3, R]
\end{lstlisting}

$System$ specifies all agents involved in the specified autonomous system. $Agent1$ specifies the autonomous mobile robot locating at $P_6$, $Agent2$ specifies the autonomous mobile robot locating at $P_7$, and $Agent3$ specifies the autonomous mobile robot locating at $P_8$. $Resource$ specifies a resource agent that manages the critical resources.  $g_1$ and $g_2$ are two delivery goals. $goals_1$, $goals_2$, and $goals_3$ can contain many goals, but each goal can be either $g_1$ or $g_2$, e.g., $goals1=[g_1, g_1, g_2, g_2, g_1]$. 

The autonomous logistic system is expected to always satisfy two safety requirements: safe battery levels and safe locations. The related safety specifications are specified in \emph{vGOAL} as follows:
\begin{lstlisting}[xleftmargin=2.0ex]
"battery(1) implies safe1",
"battery(2) implies safe1",
"exists p. at(p) and not at(9) implies safe2",
safety_properties = {"A1": ["safe1", "safe2"], "A2": ["safe1","safe2"], "A3": ["safe1","safe2"]}
\end{lstlisting}

\subsection{Model-Checking Analyses}
Given a \emph{vGOAL} specification, the automated CTL and PCTL model-checking processes will automatically generate the CTL and PCTL analyses. Obviously, CTL and PCTL are not semantically equivalent. Hence, they can express different properties, and we can use CTL and PCTL model-checking processes to verify different properties. However, CTL and PCTL can also equivalently express some qualitative properties, and we can use them to compare the efficiency of CTL and PCTL model-checking processes for \emph{vGOAL}. 

To allow the efficiency comparison of these two model-checking analyses, we define two qualitative properties that can be equivalently expressed in both CTL and PCTL. Consequently, these properties have the same satisfaction results for a given \emph{vGOAL} specification.

First, we verify there exists a path where no errors happen and it finally reaches a final state, which can be equally formalized in CTL and PCTL as follows:
\begin{itemize}
    \item CTLSPEC EG (\emph{non-errors}$\rightarrow$ EF (\emph{liveness})),
    \item $P>0$ [\emph{non-errors} U \emph{non-errors} $\land$ \emph{liveness}].
\end{itemize}
Both properties have the same satisfaction results. 

Second, safety properties can be equally formalized in CTL and PCTL as follows:
\begin{itemize}
    \item CTL in NuSMV: CTLSPEC AG \emph{safety}.
    \item PCTL in Storm: $P>=1$ [G \emph{safety}].
\end{itemize}
\label{equivprop}

\begin{table}[h]
 \caption{Efficiency of Model Generation and Model Checking}
    \label{tab:ETS}
    \centering
  \begin{tabular}{?c|c|c|c|c?c|c?c|c?} 
 \hline
 N &Agent 1 &Agent 2& Agent 3& States & \makecell{NuSMV \\Input (s)} &\makecell{Storm \\Input(s)}&\makecell{NuSMV \\Check(s)}&\makecell{Storm \\Check(s)}\\ [0.6ex] 
 \hline
   \hline
3& $g_1$  &$g_1$&$g_1$ &1799&90.19& 89.37&1.70& 6.96\\
\hline
3& $g_1$  &$g_2$&$g_1$ &3338&213.38 &213.31 &4.75& 13.44\\
\hline
3& $g_1,g_1$  &$g_2$&$g_1$ &5923&559.16 & 558.18& 13.28&29.76
 \\
\hline
3& $g_1,g_1$  &$g_1,g_1$&$g_2$ & 9975 & 1398.40&1400.85& 38.42&80.02\\

 \hline
3& $g_1, g_1$  &$g_2, g_2$&$g_1$ &10598 &1581.14&1583.39&59.13&87.10\\
\hline
3& $g_1, g_1$  &$g_1, g_1$&$g_1,g_1$ & 8484&1067.45 &  1069.13& 27.94&61.68\\
 \hline
 3& $g_1, g_1$ &$g_2, g_2$ &$g_1, g_1$ &16139&3321.36 &3318.55&95.27&181.55\\
 \hline
3&  $g_1, g_2$  &$g_2, g_1 $&$g_1, g_2$&22572&6320.08&6362.64 &224.78 &348.48\\
 \hline
3& $g_1, g_1, g_1$  &$g_2, g_2, g_2$&$g_1, g_1, g_1$ &22652&6408.20&6687.92 &220.19 &322.29\\
\hline
3& $g_1, g_2, g_1$  &$g_2, g_1, g_1$&$g_1, g_2, g_1$ &45743&25022.82&25008.40&1461.47&2153.32\\
 \hline
\end{tabular}
\end{table}

Table \ref{tab:ETS} presents the time required for generating the input models and performing model checking for 27 representative scenarios of the case study. $N$ represents the number of agents included in the system. $Agent 1$, $Agent 2$, and $Agent 3$ represent three AMRs that can be part of the system, whose value represents its goals. $States$ shows the number of states of the generated model for the given \emph{vGOAL} specification.

$NuSMV Input$ and $Storm Input$ represent the time taken from the given \emph{vGOAL} specification to the inputs of NuSMV and PRISM, respectively. 
The time for generating the inputs for both NuSMV and PRISM is almost the same.

$NuSMV Check$ and $Storm Check$ represent the time taken to verify the two qualitative properties of NuSMV and PRISM, respectively. NuSMV generally shows slightly faster to verify these properties compared with PRISM across all scenarios. However, the magnitudes of their efficiency are still the same. We notice that the relative difference in the model checking between the NuSMV and Storm becomes smaller with the increase in the complexity of the system. 

The efficiency of the CTL and PCTL model-checking for \emph{vGOAL} depends on the efficiency of the model generation and the model checking. Based on the results shown in Table \ref{tab:ETS}, the efficiency of the model generation is almost the same for both model-checking analyses. Moreover, NuSMV is slightly faster than Storm but within the same magnitude. 

Results in Table \ref{tab:ETS} reflects the bottleneck of model checking: state-space explosion. For example, the total state space increases dramatically from 1,799 to 45,743 states as the goals for each agent expand from one to three. Consequently, the total time for the model-checking processes increases from less than 2 minutes to more than 7 hours. 

In autonomous systems, each agent typically has multiple goals. When dealing with complex systems consisting of numerous agents, each with many goals, the time required for model checking can become substantial. Thus, it is crucial and practical to quickly detect errors for a complicated \emph{vGOAL} specification. Instead of directly verifying the whole state space for such a complicated system, we first choose to efficiently verify the most properties such as safety and liveness of the complicated system using Algorithm \ref{Alg:4}. 

\begin{table}[h]
 \caption{Efficiency of CTL Model Checking Using Quick Error Detection Algorithm}
    \label{tab:quick}
    \centering
  \begin{tabular}{|c|c|c|c|c|c|c| } 
 \hline
 N &Agent 1 &Agent 2& Agent 3& States &  \makecell{NuSMV \\Input (s)} &\makecell{NuSMV \\Check(s)}\\ [0.6ex] 
 \hline
   \hline
 1  & $g_1,g_2,g_1$ & \diagbox{}&\diagbox{} &142&4.26 & 0.09\\
  \hline
 1  &\diagbox{}& $g_2,g_1, g_1$  & \diagbox{}&125& 3.68&0.08 \\
 \hline
 1  &\diagbox{}&\diagbox{}& $g_1, g_2, g_1$  &142&4.24 &0.08\\
 \hline
3 & $g_1$  &$g_1$&$g_1$ &1799 &90.19&1.70 \\
\hline
3  & $g_1$  &$g_2$&$g_1$ &3338 &213.38&4.75 \\
\hline
3  & $g_1$  &$g_1$&$g_2$ & 3660&240.56&5.694 \\
 \hline
3  & $g_1$  &$g_2$&$g_2$ &2976 &179.33&4.22 \\
\hline
3  & $g_2$  &$g_1$&$g_1$ & 2976&181.05&4.26 \\
\hline
3  & $g_2$  &$g_2$&$g_1$ & 3660&242.06& 5.77\\
\hline
3  & $g_2$  &$g_1$&$g_2$ & 3338&211.61&4.84 \\
\hline
3  & $g_2$  &$g_2$&$g_2$ & 1799&90.68&2.07 \\
 \hline
\hline
Total&\diagbox{}&\diagbox{}&\diagbox{}&24355&1461.62&39.51\\
\hline
\end{tabular}
\end{table}

We use the last scenario presented in Table \ref{tab:ETS} as the example to illustrate the usage and efficiency of Algorithm \ref{Alg:4} for quick error detection. As shown in the last row of Table \ref{tab:ETS}, the total states is 45743, the time for generating NuSMV inputs and NuSMV checking is 25022.82 seconds and 1461.47 seconds, respectively.  

For the last scenario, Algorithm \ref{Alg:4} generates three one-agent autonomous systems and eight three-agent autonomous systems. We use the automated CTL model-checking processes to obtain 11 CTL model-checking analyses for safety and liveness properties. The total verified states of the 11 model-checking analyses is 24355. The total time for generating NuSMV inputs and NuSMV checking of the 11 model-checking analyses is 1461.62 seconds and  39.51 seconds, respectively. 

Notably, the total verified states across these 11 analyses are reduced from 45743 to 24355. However,  the time required for generating NuSMV inputs decreases significantly from to 25022.82 to 1461.62 seconds. Moreover, the NuSMV checking process is further drastically reduced from 1461.47 to 39.51 seconds. These results demonstrate that we can use Algorithm \ref{Alg:4} to efficiently detect errors of \emph{vGOAL} specifications, particularly in complex scenarios.

\section{Conclusion}
This paper presents CTL and PCTL automated model-checking processes for \emph{vGOAL}, enabling obtaining sound \emph{vGOAL} specifications for complex and real-world autonomous decision-making. Compared with other model-checking approaches of APLs, the presented model-checking process of \emph{vGOAL} has three main advantages. First, it has an efficient model-checking process. The experiment results demonstrate that the CTL and PCTL model-checking processes for \emph{vGOAL} can efficiently verify the \emph{vGOAL} specifications in real-world scenarios. Second, the semantic equivalence is established for both nondeterministic models and probabilistic models of \emph{vGOAL}: from \emph{vGOAL} to transition systems or DTMCs. 
Third, Algorithm \ref{Alg:4} is proposed to efficiently verify whether a \emph{vGOAL} specification violates safety or liveness properties. This algorithm is especially useful for complex scenarios. We plan to conduct an empirical analysis for similar approaches, like model checking Gwendolen using MCAPL.

\end{document}